\documentclass[12pt,reqno]{amsart}\sloppy
\usepackage{amsmath,amssymb,amsthm}
\usepackage[dvipsnames]{xcolor}
\usepackage{comment}
\usepackage[shortlabels]{enumitem}
  \setlist[enumerate,1]{leftmargin=25pt}
  \setlist[itemize,1]{leftmargin=12pt}
  \setlist[description,1]{leftmargin=15pt}
\usepackage{float}\restylefloat{figure}

% Font:
%\usepackage{mathpazo}
%\usepackage{stmaryrd}
%\usepackage[english]{babel}
%\usepackage[autostyle, english = american]{csquotes}

\usepackage[text={5.5in,10in},centering]{geometry}
\usepackage{graphicx}
\usepackage{hyperref}

\usepackage[leftmargin=16pt,vskip=6pt]{quoting}

\usepackage{setspace}
\setstretch{1.1}

\usepackage{stix2}
% used for \lBrack and \rBrack

\usepackage{url}

\parskip1ex

\newtheorem{theorem}{Theorem}[section]

\newtheorem{lemma}[theorem]{Lemma}
\newtheorem{proposition}[theorem]{Proposition}

\theoremstyle{definition}

\newtheorem{definition}[theorem]{Definition}

\newtheorem{proviso}[theorem]{Proviso}

\newcommand\At{\ensuremath{\raisebox{1pt}{\,\text{\tiny @}\,}}}

\newcommand\C{\ensuremath{\mathcal C}}
\newcommand\CELL{\ensuremath{\mathrm{CELL}}}
\newcommand\Co{\ensuremath{\mathbb C}}

\renewcommand\d{\ensuremath{\delta}}
\newcommand\dpa{\ensuremath{\d_{pa}}}

\newcommand\Entries{\ensuremath{\textrm{Entries}}}
\newcommand\Exits{\ensuremath{\textrm{Exits}}}
\newcommand\Gm{\ensuremath{\Gamma}}
\renewcommand\H{\ensuremath{\mathcal H}}

\newcommand\Inputs{\ensuremath{\textrm{Inputs}}}

\newcommand\Lpab{\ensuremath{L_{pab}}}
\newcommand\mo{{\raisebox{1pt}{\tiny$-$}}\text{\small1}}

\newcommand\ort{\ensuremath{\; \bot\;}}
\newcommand\Outputs{\ensuremath{\textrm{Outputs}}}
\newcommand\ox{\ensuremath{\otimes}}

\renewcommand\phi{\varphi}
\newcommand\po{{\raisebox{1pt}{\tiny$+$}}\text{\footnotesize1}}
\newcommand\qef{\hfill$\triangleleft$} %Quod erat faciendum
\newcommand\qefhere{\tag*{$\triangleleft$}} % variant a of qef
\newcommand\Qmo{\ensuremath{Q_{\mo}}}
\newcommand\Qpo{\ensuremath{Q_{\po}}}

\newcommand\Rpab{\ensuremath{R_{pab}}}

\newcommand\U{\ensuremath{\mathcal U}}

\newcommand\x{\ensuremath{\times}}
\newcommand\Z{\ensuremath{\mathbb Z}}

\newcommand\If{\ensuremath{\mathrm{if }}}

\newcommand\braket[2]{\ensuremath{\langle#1\,|\,#2\rangle}}
\newcommand\ket[1]{\ensuremath{|#1\rangle}}
\newcommand\ketb[1]{\ensuremath{\big|#1\big\rangle}}

\newcommand\set[1]{\ensuremath{\{#1\}}}

\newcommand\qn{\ensuremath{q^{\uparrow}}}
\newcommand\Qn{\ensuremath{Q^{\uparrow}}}
\newcommand\qs{\ensuremath{q^{\downarrow}}}
\newcommand\Qs{\ensuremath{Q^{\downarrow}}}
\newcommand\vn{\ensuremath{v^{\uparrow}}}
\newcommand\vs{\ensuremath{v^{\downarrow}}}

\newcommand\Rn{\ensuremath{R^{\uparrow}}}
\newcommand\Ln{\ensuremath{L^{\uparrow}}}
\newcommand\Rs{\ensuremath{R^{\downarrow}}}
\newcommand\Ls{\ensuremath{L^{\downarrow}}}

\title[]%{Parametrized circuits {\large 1}}
{Simple circuit simulations of\\ classical and quantum Turing machines}
\author{Yuri Gurevich}
\address{Computer Science and Engineering\\
University of Michigan\\
Ann Arbor, MI  48109-2121, U.S.A}
\email{gurevich@umich.edu}
\thanks{Partially supported by the US Army Research Office under W911NF-20-1-0297}
\author{Andreas Blass}
\address{Mathematics Department\\
University of Michigan\\
Ann Arbor, MI 48109--1043, U.S.A.}
\email{ablass@umich.edu}

\begin{document}

\begin{abstract}
We construct reversible Boolean circuits efficiently simulating reversible Turing machines. Both the circuits and the simulation proof are rather simple. Then we give a fairly straightforward generalization of the circuits and the simulation proof to the quantum case.
\end{abstract}

\maketitle
\thispagestyle{empty}

\section{Introduction}
\label{sec:intro}

The origin of this paper was our dissatisfaction with the standard definition of the uniform circuit complexity.
In the circuit computation model, a computational problem is associated with a family $C_1, C_2, \dots$ of circuits where each circuit $C_n$ handles $n$-bit inputs in the Boolean case or $n$-qubit inputs in the quantum case.
The most common uniformity condition requires that a single logspace\footnotemark\ Turing machine computes every $C_n$ given $n$ in unary notation.
\footnotetext{Another popular uniformity condition uses polynomial-time rather than logspace, but logspace uniformity is more common.}

This complexity-theoretic uniformity is important for theoretical purposes but is not used for practical programming.
Instead, circuits are usually described parametrically, and this suffices to guarantee uniformity.

A natural problem arises to formalize the notion of parametric uniformity so that one can deal with it theoretically.
In order to address this problem, we looked around for challenging examples.
In this connection, we came across a recent paper \cite{MW} by  Abel Molina and John Watrous describing circuits that simulate quantum Turing machines.
We tried to achieve such simulations by parametric circuits.
This paper reports our results in this direction as well as analogous results for reversible classical Turing machines.

We should emphasize though that the idea of parametric uniformity is just a motivation as far as this paper is concerned.
We intend to formalize parametric uniformity for Boolean and quantum circuits elsewhere \cite{G260}.
For the time being we use the notion of parametrized circuits informally.

It is well known that Turing machines are efficiently simulated by (uniform families of) Boolean circuits. In extended abstract \cite{Yao}, Andrew Yao sketched a proof that quantum Turing machines are efficiently simulated by (rather sophisticated) quantum circuits.
%Quantum Turing machnes are thoroughly studied by Ethan Bernstein and Vijay Vazirani in \cite{BV}.
In paper \cite{MW}, Molina and Watrous give an elegant proof of Yao's result, but their circuits and their proofs are still rather involved.

A question arises whether simpler quantum circuits can simulate quantum Turing machines.
A related question is whether reversible Turing machines can be efficiently simulated by reversible Boolean circuits.
\begin{comment}
Since every deterministic one-tape Turing machine is polynomial-time simulated by a deterministic one-tape reversible Turing machine \cite[\S5]{Morita}, this is the same question whether Turing machines and reversible Boolean circuits are polynomial-time equivalent.
\end{comment}

In this paper, we construct reversible Boolean circuits which efficiently simulate reversible Turing machines.
Both the circuits and the simulation proof are rather simple.
Then we give a fairly straightforward generalization of the circuits and the simulation proof to the quantum case.

\section{Preliminaries}\mbox{}
\label{sec:prelims}

The reader may want to skip this section and consult it only if and when needed.

Reversible Boolean circuits and quantum circuits are mathematical notions requiring precise definitions, and so are Boolean circuit algorithms and quantum circuit algorithms.
Not finding such definitions in popular textbooks, we produced definitions in \cite{G244} and \cite{G250}. To make this paper self-contained, we reproduce those definitions here with minor changes.
An added advantage is useful terminology that arises in the course of the definitions.

\begin{definition}
\label{def:syncirc}
A \emph{syntactic circuit} consists of the following components.
\begin{enumerate}
\item Disjoint finite sets of \emph{input nodes}, \emph{output nodes}, and \emph{gates}.
\item For each gate $G$, two disjoint nonempty finite sets, the set $\Entries(G)$ of the \emph{entries} of $G$ and the set $\Exits(G)$ of the \emph{exits} of $G$.\\
    The sets associated with any gate are disjoint from those associated with any other gate and from the sets in (1).
    The input nodes and gate exits will be called \emph{producers}. The gate entries and output nodes will be called \emph{consumers}.
\item A set of \emph{wires}.  Each wire begins at a producer and ends at a consumer.
    Each producer is the beginning of at least one wire.
    Each consumer is the end of exactly one wire.
\item A gate $G$ is a \emph{direct prerequisite} of a gate $G'$ if some wire goes from an exit of $G$ to an entry of $G'$.
    The direct prerequisite relation is acyclic.
    Its transitive closure is called the \emph{prerequisite} relation (without ``direct''). \qef
\end{enumerate}
\end{definition}

\begin{definition}
Consider a syntactic circuit.
\begin{itemize}
\item A \emph{gate bout} of the circuit is a nonempty set $B$ of gates which form an antichain (so that no gate $G\in B$ is a prerequisite of another gate $G'\in B$).
\item A \emph{schedule} over the circuit is a sequence
  $(B_1; B_2; B_3; \dots; B_T)$
  of disjoint gate bouts such that every gate $G$ belongs to some bout $B_t$ and all prerequisites of $G$ belong to earlier bouts $B_s$ where $s<t$.
\item A schedule $(B_1; B_2; B_3; \dots; B_T)$ is \emph{eager} if every $B_t$ comprises all gates whose prerequisites are all in $\bigcup_{s<t} B_s$ (so that in particular $B_1$ comprises all gates with no prerequisites). \qef
\end{itemize}
\end{definition}

\begin{definition}
\label{def:bcirc}
  A \emph{Boolean circuit} is a syntactic circuit together with an assignment, to each gate $G$, of a Boolean function
  \[ \beta_G: \set{0,1}^{\Entries(G)} \to \set{0,1}^{\Exits(G)}. \]
  An \emph{algorithm} over the Boolean circuit is given by a schedule over the underlying syntactic circuit.
  An algorithm is \emph{eager} if the schedule is eager. \qef
\end{definition}

Intuitively, the algorithm given by a schedule $(B_1; B_2; B_3; \dots; B_T)$ first fires all $B_1$ gates in parallel, then it fires all $B_2$ gates in parallel, and so on. Thus $T$ is the number of steps of the algorithm.
All algorithms over the same Boolean circuit \C\ compute the same Boolean function
  \[ \beta_{\C}: \set{0,1}^{\Inputs(\C)} \to \set{0,1}^{\Outputs(\C)} \]
but may differ in the number of steps.
Below, for simplicity, we restrict attention to eager algorithms.

\begin{proviso}
Below, by default, a Boolean circuit is identified with its eager algorithm.
\end{proviso}

\begin{definition}
\label{def:rcirc}
A Boolean circuit is \emph{reversible} if
\begin{enumerate}
\item each gate function $\beta_G$ is a bijection (so that in particular $G$ has equally many entries and exits), and
\item each provider has exactly one outgoing wire (so that the wires determine a bijection from producers to consumers). \qef
\end{enumerate}
\end{definition}

\section{Turing machines}
\label{sec:tm}

We presume a basic familiarity with Turing machines, in short TMs, and specify a particular TM species.
In this paper, by default, a Turing machine is deterministic and has one two-way infinite tape whose cells are indexed by integers.
We also assume that the tape head moves at every step.
A Turing machine is given by a finite set $Q$ of (control) states, a tape alphabet \Gm\ including a special blank symbol, and a partial transition function
\[
\delta : Q \x \Gamma \to Q \x \Gamma \x \set{-1,+1}. \qefhere
\]
If the machine is in state $p$, its head is scanning symbol $a$ in tape cell $i$, and $\delta(p,a) = (q, b, D)$, then in one step the machine will change state to $q$, write $b$ in the tape cell $i$, and move the tape head to cell $i+D$. (The change is vacuous if $q=p$ and the writing is vacuous if $b=a$.)
If $\delta(p,a)$ is undefined then the machine halts.
Initially, the tape is almost everywhere blank, i.e., the number of nonblank cells is finite.

A \emph{configuration} of a Turing machine $(Q,\Gm,\d)$ is a triple
\[ (p,i,X) \in Q\x\Z\x\Gm^\Z\]
where \Z\ is the set of integers and the function $X:\Z\to\Gm$ is almost everywhere blank.
The configuration $(p,i,X)$ means that the state is $p$, the tape head is in tape cell $i$, and $X$ is what is written on the tape.
(Of course, the blanks far to the left or the right
can be ignored.
But notationally it simpler to use all of \Z\ as the domain of $X$.)
Notice that the notion of configuration depends only on $Q$ and $\Gm$, not on \d.

Obviously, the transition function \d\ gives rise to a partial successor function on the configurations.
In detail, if $\d(p,X(i)) = (q,b,D)$, then the \emph{successor} of configuration $(p,i,X)$ is the configuration
\[ \Big(q,i+D,X(b\At i)\Big) \]
where the symbol \At\ indicates that symbol $b$ is written at cell $i$, so that $X(b\At i)(i)=b$ and $X(b\At i)(j) = X(j)$ for all $j\ne i$.
Accordingly $(p,i,X)$ is a \emph{predecessor} of $\Big(q,i+D,X(b\At i)\Big)$.
If $\d(p,X(i))$ is undefined, then $(p,i,X)$ has no successor.

\subsection{Reversible Turing machines}
\label{sub:rtm}

\begin{definition}\label{def:crev}
A Turing machine is \emph{reversible} if every configuration has at most one predecessor. \qef
\end{definition}

Arbitrary Turing machines are efficiently simulated by reversible ones; see \cite[\S5]{Morita}.
The criterion of reversibility of TM $(Q,\Gm,\d)$ in terms of \d\ is found in \cite[\S5.1.1.2]{Morita}. We present it in a form convenient for our purposes, and we use the following notation:
\begin{align*}
Q_{\mo} = \set{q: \d(p,a)=(q,b,-1)\ \text{ for some }\ p,a,b},\\
Q_{\po} = \set{q: \d(p,a)=(q,b,+1)\ \text{ for some }\ p,a,b}.
\end{align*}

\begin{proposition}\label{prp:crev}
A Turing machine $(Q,\Gm,\d)$ is reversible if and only if \d\ has the following two properties.
\begin{description}
\item[\tt Separability] $Q_{\mo}$ and $Q_{\po}$ are disjoint.
\item[\tt Injectivity] For every pair $(q,b)\in Q\x\Gamma$, there is at most one pair $(p,a)\in Q\x\Gamma$ such that $\d(p,a) = (q,b,D)$ for some $D$.
\end{description}
\end{proposition}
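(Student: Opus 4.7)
The plan is to prove both directions of the equivalence by directly unpacking what it means for a configuration $(q,j,Y)$ to have a predecessor. Recall that $(p,i,X)$ is a predecessor of $(q,j,Y)$ exactly when $\d(p,X(i)) = (q,b,D)$ for some $b,D$ with $i+D=j$ and $X(b\At i)=Y$; in particular $Y(i)=b$ and $X$ agrees with $Y$ off cell $i$. So specifying a predecessor amounts to specifying a triple $(p,a,D)$ with $\d(p,a)$ ending in state $q$ and symbol $Y(j-D)$, together with the placement $i=j-D$.

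For the ``if'' direction, suppose Separability and Injectivity both hold and let $(p_1,i_1,X_1)$ and $(p_2,i_2,X_2)$ be two predecessors of the same configuration $(q,j,Y)$, arising from transitions $\d(p_k,a_k)=(q,b_k,D_k)$ with $a_k=X_k(i_k)$. Separability forces $D_1=D_2$, since $q$ cannot lie in both $Q_{\mo}$ and $Q_{\po}$; hence $i_1=i_2=j-D_1$. Next, the equation $X_k(b_k\At i_k)=Y$ evaluated at cell $i_k$ gives $b_k=Y(i_k)$, so $b_1=b_2$. Injectivity now forces $(p_1,a_1)=(p_2,a_2)$, which gives $p_1=p_2$ and $X_1(i_1)=X_2(i_2)$. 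Agreement of $X_1$ and $X_2$ on all other cells follows because each equals $Y$ there, so the two predecessors coincide.

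For the ``only if'' direction, I prove the contrapositive by explicit constructions. First suppose Separability fails: pick $q\in Q_{\mo}\cap Q_{\po}$, witnessed by $\d(p_1,a_1)=(q,b_1,-1)$ and $\d(p_2,a_2)=(q,b_2,+1)$. Choose any configuration $(q,j,Y)$ with $Y(j+1)=b_1$ and $Y(j-1)=b_2$; such a $Y$ exists since we may freely set finitely many cells. Then $(p_1,j+1,X_1)$ with $X_1$ obtained from $Y$ by overwriting cell $j+1$ with $a_1$, and $(p_2,j-1,X_2)$ defined symmetrically, are both predecessors, and they are distinct because their head positions differ. Now suppose Separability holds but Injectivity fails: then there exist $(p_1,a_1)\ne(p_2,a_2)$ with $\d(p_k,a_k)=(q,b,D_k)$ for the same $q,b$; by Separability $D_1=D_2=:D$. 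Set $i=j-D$, pick any $(q,j,Y)$ with $Y(i)=b$, and define $X_k$ to agree with $Y$ off cell $i$ and satisfy $X_k(i)=a_k$. Both $(p_k,i,X_k)$ are predecessors and they differ in $p$ or in the symbol at cell $i$, contradicting reversibility.

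The construction is essentially mechanical once one writes out the definition of ``predecessor'' carefully; the only place that requires a small observation is the second part of the contrapositive, where one needs Separability to conclude $D_1=D_2$ before exploiting the failure of Injectivity. There is no real obstacle beyond bookkeeping the components of a configuration.
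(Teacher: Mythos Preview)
Your proof is correct and follows essentially the same approach as the paper's: both prove the forward direction by contrapositive, first exhibiting two distinct predecessors when Separability fails and then (using Separability) when Injectivity fails, and both prove the reverse direction by showing that Separability pins down the head position of any predecessor and Injectivity then pins down the state and overwritten symbol. The only differences are cosmetic---the paper works with concrete cell indices $-1,0,+1$ while you keep a general position $j$, and you front-load the unpacking of ``predecessor'' a bit more explicitly---but the logical skeleton is identical.
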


\begin{proof}
Let $X:\Z\to\Gamma$ be almost everywhere blank.

If the separability requirement fails, then $M$ is irreversible. Indeed, if
$\delta(p_1,a_1) = (q,b_1,\po)$ and $\delta(p_2,a_2) = (q,b_2,\mo)$, then configurations
\[ \Big(p_1,\mo,X(a_1\At\mo)(b_2\At\po)\Big) \ne
   \Big(p_2,\po,X(b_1\At\mo)(a_2\At\po)\Big), \]
have the same successor $(q,0,X(b_1\At\mo)(b_2\At\po))$.

If the separability requirement holds but the injectivity requirement fails, then $M$ is irreversible.
Indeed, if $\delta$ maps distinct pairs $(p_1,a_1), (p_2,a_2)$ to triples $(q,b,D_1), (q,b,D_2)$ respectively, then $D_1 = D_2$ by the separability requirement, and therefore configurations $(p_1,0,X(a_1\At0)) \ne (p_2,0,X(a_2\At0))$ have the same successor $(q,D_1,X(b\At0))$.

If both requirements hold, then $M$ is reversible. Indeed suppose that configurations $(p_1,i_1,X_1)$ and $(p_2,i_2,X_2)$ have the same successor $(q,j,X_3)$ where $q\in Q_D$.
Let $i=j-D$, so that $i_1 = i_2 = i$.
Then $X_1, X_2$ coincide with $X_3$ except possibly at $i$.
Then $\delta(p_1,X_1(i)) = \delta(p_2,X_2(i)) = (q,X_3(i),D)$. By injectivity, $p_1 = p_2$ and $X_1(i) = X_2(i)$, and so $(p_1,i_1,X_1) = (p_2,i_2,X_2)$.
\end{proof}

Taking into account the separability property, states in \Qmo\ and \Qpo\ will be called \emph{negative} and \emph{positive} respectively when the Turing machine in question is reversible.
If $Q_{\mo}\cup Q_{\po} \neq Q$, extend $Q_{\mo}, Q_{\po}$ arbitrarily so that they remain disjoint and $Q_{-1} \cup Q_{+1} = Q$.

\begin{lemma}\label{lem:total}
The transition function \d\ of a Turing machine $M = (Q,\Gm,\d)$ can be extended to a total function $\d^*$ in such a way that
\begin{enumerate}
\item if, starting with an initial configuration $(p,i,X)$, $M$ never halts, then the modified machine $M^* = (Q,\Gm,\d^*)$ works exactly as $M$ on that initial configuration, and
\item if $M$ is reversible then $M^*$ is reversible.
\end{enumerate}
\end{lemma}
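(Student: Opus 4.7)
The plan is to define $\d^*$ by keeping $\d$ wherever it is already defined and assigning new transitions for each pair $(p,a)$ on which $\d$ is undefined, choosing them so as to preserve reversibility when $M$ is reversible. Property~(1) will then be immediate: an infinite computation of $M$ visits only pairs $(p,a)$ where $\d$ is defined, so $M^*$ will trace out the same sequence of configurations as $M$.

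For property~(2), I would assume $M$ is reversible and first invoke the paragraph preceding the lemma to fix a disjoint partition $Q = \Qmo \cup \Qpo$ extending the original sign classification of states. Let $U \subseteq Q \x \Gm$ denote the set of pairs where $\d$ is undefined, and let $V \subseteq Q \x \Gm$ denote the set of pairs $(q,b)$ that occur as the first two coordinates of some value of $\d$. The injectivity clause of Proposition~\ref{prp:crev} tells us that $(p,a) \mapsto (q,b)$ is injective on the domain of $\d$, so $|V| = |Q \x \Gm| - |U|$ and hence $|(Q \x \Gm) \setminus V| = |U|$. I would then pick any bijection $f : U \to (Q \x \Gm) \setminus V$ and, for each $(p,a) \in U$, set $\d^*(p,a) = (q,b,D)$ where $(q,b) = f(p,a)$ and $D$ is the sign ($\mo$ or $\po$) of $q$ under the chosen partition. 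If $M$ is not reversible, one extends $\d$ to a total function arbitrarily.

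Verifying that $\d^*$ is reversible is then a quick check against Proposition~\ref{prp:crev}. Separability holds automatically because the sign of each image state is dictated by the chosen disjoint partition of $Q$. For injectivity, one supposes that $\d^*(p_1,a_1)$ and $\d^*(p_2,a_2)$ agree in their first two coordinates $(q,b)$ and splits into cases by where the preimages come from: if both lie in the original domain of $\d$, injectivity of $\d$ applies; if both lie in $U$, injectivity of $f$ applies; the mixed case is ruled out because new $\d^*$-images are chosen in $(Q \x \Gm) \setminus V$.

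The only substantive step is the cardinality equality $|U| = |(Q \x \Gm) \setminus V|$, and that is precisely where reversibility of $M$ is consumed, letting $f$ be a bijection rather than merely an injection. Everything else is routine bookkeeping, so I do not anticipate any real obstacle.
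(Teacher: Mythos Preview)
Your proposal is correct and is essentially the paper's own argument. The paper phrases the key step slightly more compactly---it defines $\alpha(p,a)=(q,b)$ whenever $\d(p,a)=(q,b,D)$, notes that $\alpha$ is injective by Proposition~\ref{prp:crev}, extends $\alpha$ to a bijection $\alpha^*$ on $Q\times\Gm$, and then sets $\d^*(p,a)=(q,b,D)$ with $D$ the sign of $q$---but extending a partial injection on a finite set to a bijection is exactly your cardinality-and-$f$ construction, and the verification against separability and injectivity is the same.
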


\begin{proof}
Claim~(1) is obvious no matter how \d\ is extended to a total function $\d^*$.
To prove claim~(2), assume that $M$ is reversible. By the separability property and the convention just before the lemma, $Q_{\mo}$ and $Q_{\po}$ partition $Q$.

If $\d(p,a) = (q,b,D)$, let $\alpha(p,a) = (q,b)$.
By the injectivity property, $\alpha$ is an injective map from $Q\x\Gamma$ to $Q\x\Gamma$. Extend $\alpha$ to a bijection $\alpha^*$ from $Q\x\Gamma$ to $Q\x\Gamma$ in an arbitrary way.
If $\alpha^*(p,a) = (q,b)$ and $q\in Q_D$, define $\d^*(p,a) = (q,b,D)$.
Clearly $(Q,\Gm,\d^*)$ is reversible.
\end{proof}

\subsection{Looped Turing machine}

This paper is devoted to easy circuit simulation of reversible Turing machines and of quantum Turing machines.
Following the example of Molina and Watrous \cite{MW}, we separate concerns and work primarily with machines $(Q,\Gm,\d)$ where \d\ is total, so that the machine never halts.
We address the halting-related issues in \S\ref{sub:hang}.

Fix an arbitrary Turing machine $M_\infty = (Q,\Gm,\d)$ with total transition function \d.
$M$ never hangs.
Further, fix a positive integer $T$, and restrict attention to $T$-step computations of $M_\infty$ where the initial configuration is blank outside the segment $[-T,T]$ of the tape. During such a computation the head of $M_\infty$ is confined to the segment $[-T,T]$.

Truncate the tape of $M_\infty$ to the segment $[-T-1, T+1]$, bend that segment into a loop, and identify the end cells $-T-1$ and $T+1$, resulting in $N = 2T+2$ cells altogether.
Let $M = (Q,\Gm,\d)_T$, be the resulting looped-tape Turing machine.

View the cell indices as residues modulo $N$, and by default interpret index modifications $i\pm1$ modulo $N$.
If the state of $M$ is $p$, its head is scanning symbol $a$ in cell $i$ (modulo  $N$ of course), and $\delta(p,a) = (q, b, D)$,
then in one step $M$ changes its state to $q$, writes $b$ in cell $i$, and moves the head to cell $i+D$.
As far as our $T$-step computations are concerned, there is no difference between $M_\infty$ and $M$.

A \emph{configuration} of $M$ is a triple
\[ (p,i,X) \in Q \x (\Z/N) \x \Gm^{\Z/N}; \]
the role of \Z\ is played by the set $\Z/N$ of residues modulo $N$.
As in the case of $M_\infty$, \d\ induces a successor function on the configurations.
As in the case of $M_\infty$, $M$ is \emph{reversible} if every configuration has at most one predecessor.
Proposition~\ref{prp:crev} remains valid for $M$, with the same proof.
It follows that $M$ is reversible if and only if $M_\infty$ is reversible.

\section{Boolean circuit simulation}
\label{sec:bc}

We will simulate $T$-step computations of the looped-tape Turing machine $M = (Q,\Gm,\d)_T$ with a Boolean circuit of depth $2T$ where rows of gates are staggered like the bricks in the brick wall of Figure~\ref{fig:bwall}.

\begin{figure}[h]
\includegraphics[scale=.4,trim=40 480 40 54,clip]{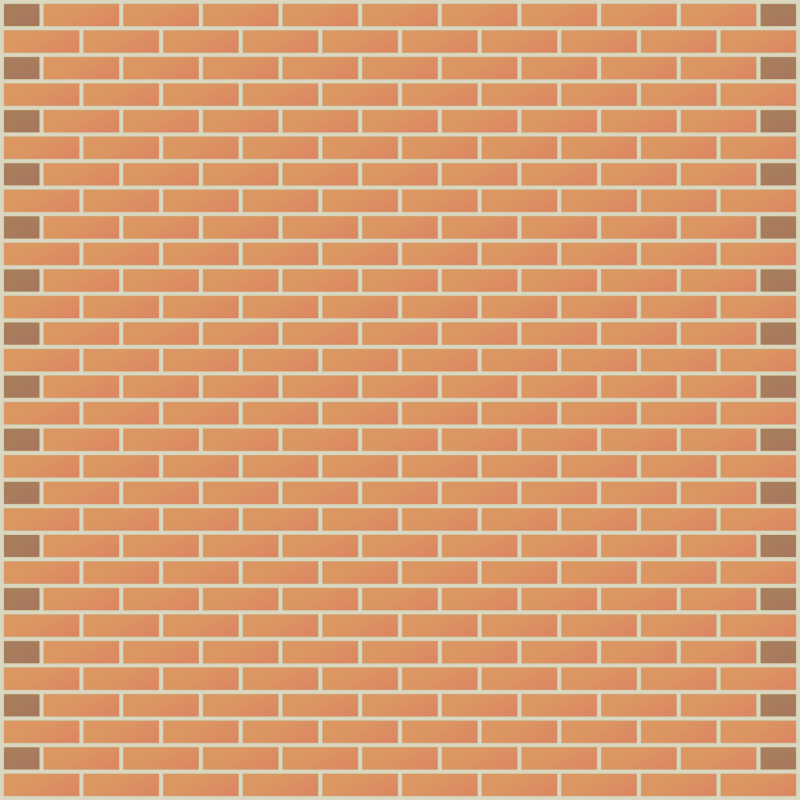}
% trim = left bottom right  top
\caption{A brick wall (American bond). Source: Wikimedia \cite{WikiBwall} }\label{fig:bwall}
\end{figure}

For our purposes, it is convenient to view the brick wall as being composed of half bricks, intended to represent tape cells of $M$ at various stages of computation.
Each half brick is uniquely determined by two coordinates.
\begin{description}
\item[\texttt{Horizontal coordinate}]
The brick wall comprises $N = 2T+2$ vertical columns of half bricks.
Number the vertical columns left to right using the residues $0,1,\dots,N-1$ modulo $N$.
This gives us the horizontal coordinate $h$.
\item[\texttt{Vertical coordinate}]
The wall has $2T$ rows of bricks.
Number them top down: $0,1, \dots, 2T-1$.
This gives us the vertical coordinate $v$.
\end{description}

A brick comprising the half bricks $(h,v)$ and $(h+1,v)$ represents a circuit gate denoted $G(h,v)$.
Notice that half brick $(h,v)$ is the left half of its brick if and only if $h+v$ is even.
Thus, if $h+v$ is even, then half brick $(h,v)$ belongs to $G(h,v)$; and, if $h+v$ is odd, then half brick $(h,v)$ belongs to gate $G(h-1,v)$.

The use of residues modulo $N$ for numbering the columns hints at a property of our brick wall which is not reflected in Figure~\ref{fig:bwall}. Our wall is bent into a cylinder, reflecting the loop structure of $M$.
The two half bricks at the ends of row 1 (or any other odd-numbered row) are the two halves of the same brick.

Think of the brick wall as a greedy circuit algorithm where each brick is a gate.
The algorithm is greedy in the sense that, at each step, it executes all the gates that can be executed.
All the gates of row 0 fire at the first step of the algorithm, all the gates of row 1 fire at the second step, and so on.

For every $t = 0, 1, \dots, T-1$, the subcircuit composed of rows $2t$ and $2t+1$ will transform any configuration
\[ (p,i,X) \in Q\x \Z/N \x \Gm^\Z \]
of the looped Turing machine $M$ into the successor configuration.
However, for technical reasons, the circuit works with richer data.
We introduce annotated versions $\qn, \qs$ of every state $q\in Q$.
It is presumed that the sets
\[ Q,\quad \Qn=\set{\qn: q\in Q},\quad
     \Qs=\set{\qs: q\in Q},\quad \set{0} \]
are disjoint. Let $Q^* = Q\sqcup \Qn\sqcup \Qs$.
Every row of the brick wall circuit will compute a unary operation on the set
\[ Q^*\x \Z/N \x \Gm^{\Z/N} \]
of \emph{extended configurations} of $M$.

\subsection{The brick function}
\label{sub:bf}\mbox{}

\noindent
All bricks (that is all brick gates) compute the same function $f$, the \emph{brick function}.
A \emph{cell datum} is an element of the set $(Q^*\sqcup\set{0}) \x \Gamma$.
Thus, a cell datum has the form $(q^*,a)$ or $(0,a)$ where $q^*\in Q^*$ and $a\in\Gm$.
In the first case, it is \emph{scanned}, and, in the second, it is \emph{unscanned}.
A \emph{brick datum} is a pair of cell data where at most one cell datum is scanned.
Thus, a brick datum has the form
$(q^*,a,0,b)$, $(0,a,q^*,b)$, or $(0,a,0,b)$, where $q^*\in Q^*$ and $a,b\in\Gm$.
In the first case, it is a \emph{one-head} datum, and, in the second, it is \emph{no-head} datum.

A brick (or brick gate) computes the brick function $f$ transforming any input brick datum to an output brick datum. Such inputs and outputs can, if desired, easily be coded as tuples of bits to make the circuits Boolean.

In the rest of this subsection, we define the brick function $f$. To begin, unsurprisingly
\begin{equation}\label{f0}
f(0,a,0,b) = (0,a,0,b).
\end{equation}
In other words, a no-head input is simply copied into the output.
(For brevity, we'll write $fx$ for $f(x)$.)
Suppose that the input brick datum $x$ includes the scanned cell. Three cases arise depending on whether the state symbol is in $Q$, \Qn, or \Qs.

First suppose that the state symbol is in $Q$, so that the input datum $x$ has the form $(p,a,0,c)$ or $(0,c,p,a)$. Let $\d(p,a) = (q,b,D)$. Then
\begin{equation}\label{f1}
\begin{split}
q\in Q_{+1} \implies f(p,a,0,c) = (0,b,\qn,c) \ \land\
                      f(0,c,p,a) = (0,c,\qs,b),\\
q\in Q_{-1} \implies f(p,a,0,c) = (\qs,b,0,c) \ \land\
                      f(0,c,p,a) = (\qn,c,0,b).
\end{split}
\end{equation}
Thus $\uparrow$ indicates successful execution of the \d-transition, and $\downarrow$ indicates incomplete execution when the brick updated the state and tape symbol but couldn't move the head.

Second, after a successful execution, all that remains is to remove $\uparrow$. We must, however, also define $f$ on 2-data with $\uparrow$ that cannot arise in the Turing machine simulation. In such ``garbage'' cases, we simply interchange the two cell data in the input brick datum.
\begin{equation}\label{f2}
\begin{split}
q\in Q_{+1} \implies f(\qn,c,0,b) = (q,c,0,b) \ \land\
                            f(0,b,\qn,c) = (\qn,c,0,b),\\
q\in Q_{-1} \implies f(0,b,\qn,c) = (0,b,q,c) \ \land\
                            f(\qn,c,0,b) = (0,b,\qn,c).
\end{split}
\end{equation}

Finally, after an incomplete execution, what remains is to move the head in the appropriate direction and remove $\downarrow$. Again, we define $f$ on ``garbage'' cases by interchanging the two cell data in the input brick datum.
\begin{equation}\label{f3}
\begin{split}
q\in Q_{+1} \implies f(\qs,b,0,c) = (0,b,q,c) \ \land\
                            f(0,c,\qs,b) = (\qs,b,0,c),\\
q\in Q_{-1} \implies f(0,c,\qs,b) = (q,c,0,b) \ \land\
                            f(\qs,b,0,c) = (0,c,\qs,b).
\end{split}
\end{equation}
This completes the definition of the brick function $f$.

\subsection{Reversibility}
\label{sub:rns}\mbox{}

\begin{theorem}
If Turing machine $M$ is reversible, then the brick wall circuit is reversible.
\end{theorem}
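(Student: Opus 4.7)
The reversibility criterion of Definition~\ref{def:rcirc} breaks into two requirements: each gate function must be a bijection, and each producer must have exactly one outgoing wire. The second is immediate from the brick-wall geometry. Every half-brick position $(h,v)$ with $v < 2T-1$ has its output wired only to the half-brick position $(h,v+1)$ directly below, and each input node feeds exactly one half-brick position in row $0$. The wires therefore form a bijection from producers to consumers, and the content of the theorem reduces to showing that the single brick function $f$ is a bijection on the set of brick data.

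My plan is to partition the brick data into blocks and show that $f$ permutes the blocks, restricting to a bijection on each one. One block is the set of no-head data $(0,a,0,b)$, on which $f$ acts as the identity by \eqref{f0}. The one-head data split into twelve further blocks indexed by three binary attributes: the side (L or R) of the scanned cell, the set ($Q$, $\Qn$, or $\Qs$) containing the state symbol, and the direction subscript ($+1$ or $-1$). Separability from Proposition~\ref{prp:crev}, together with the convention just before Lemma~\ref{lem:total}, makes the direction subscript well-defined even for symbols in $Q$. Inspection of \eqref{f1}, \eqref{f2}, \eqref{f3} then shows that each of the twelve sub-rules of $f$ acts within a single input block and maps it into a single output block, and a short tally verifies that the induced map on the thirteen blocks is a permutation.

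It remains to check bijectivity of $f$ on each individual block. The rules in \eqref{f2} and \eqref{f3} are pure relabellings (stripping a $\uparrow$ or $\downarrow$) or swaps of the two cell data, and these are manifestly bijections between source and target blocks. The four rules in \eqref{f1} encode the Turing-machine transition $\delta$; bijectivity for them is exactly the injectivity clause of Proposition~\ref{prp:crev} combined with totality of $\delta$, which together make the map $(p,a)\mapsto(q,b)$ a bijection on $Q\times\Gm$, with the direction $D$ then recoverable from $q$ by separability. Once each of the twelve sub-rules is bijective, $f$ is bijective overall.

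The main obstacle is merely the twelve-case bookkeeping: one must be careful that every target block is produced by exactly one sub-rule so the block-level map is genuinely a permutation. No property of $M$ beyond reversibility, funneled through Proposition~\ref{prp:crev}, enters the argument.
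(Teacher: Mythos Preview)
Your strategy---partition the one-head data into twelve blocks and show that $f$ permutes them---differs from the paper's, which argues injectivity of $f$ directly by assuming $fx=fy$ and case-splitting on which of \eqref{f1}--\eqref{f3} produce $fx$ and $fy$, invoking separability to rule out cross-equation collisions and injectivity of $\delta$ for the \eqref{f1}--\eqref{f1} case. A block decomposition can be made to work, but the particular partition you describe does not. For a datum $(p,a,0,c)$ with $p\in Q$ you assign the direction subscript of $p$, i.e.\ the $D$ with $p\in Q_D$; however, the four sub-rules in \eqref{f1} are selected by the direction of the \emph{output} state $q$, where $\delta(p,a)=(q,b,D)$, and nothing in the hypotheses ties the direction of $p$ to that of $q$. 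Concretely, your block $\{(p,a,0,c):p\in Q_{+1}\}$ is carried by \eqref{f1} into both $(R,\Qn,+1)$ and $(L,\Qs,-1)$, according to the sign of $\delta(p,a)$; so the ``induced map on the thirteen blocks'' is not even well-defined, and the tally you promise cannot be carried out.

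The ingredients in your final paragraph are exactly what a repair needs. Since $(p,a)\mapsto(q,b)$ is a bijection on $Q\times\Gamma$ and $D$ is recoverable from $q$ by separability, equations \eqref{f1} taken together give a bijection from the undivided left-$Q$ class $\{(p,a,0,c):p\in Q\}$ onto $(R,\Qn,+1)\cup(L,\Qs,-1)$ and from the right-$Q$ class onto $(R,\Qs,+1)\cup(L,\Qn,-1)$. Combined with the eight straightforward block bijections you correctly extract from \eqref{f2}--\eqref{f3}, this proves $f$ bijective. In other words, drop the direction attribute on the $Q$-blocks (leaving eleven blocks, two of them of double size) rather than claim a permutation of thirteen equal blocks.
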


\begin{proof}
It suffices to prove that the brick gate is reversible, i.e., that the function $f$ defined above is a bijection.
Obviously, $f$ is bijective on no-head inputs.
So we may restrict attention to one-head inputs.
Since every injection from a finite set to a finite set of the same cardinality is a bijection, it suffices to prove that
$ fx = fy \implies x = y $
on one-head inputs.  To this end, assume $fx=fy$.

First we suppose that $fx, fy$ are defined by different equation systems \eqref{f1}--\eqref{f3}. By symmetry, we may assume that $fy$ is defined by a later equation system and that $fx$ is defined by one of the upper equations in \eqref{f1}, in \eqref{f2}, or in \eqref{f3}.
By inspection, there are only three cases:
\begin{enumerate}
\item $fx$ has the form $(0,b,\qn,c)$ and is defined the first upper equation in \eqref{f1}, while $fy$ is defined by the second lower equation in \eqref{f2}.
\item $fx$ has the form $(0,b,\qs,c)$ and is defined the second upper equation in \eqref{f1}, while $fy$ is defined by the second lower equation in \eqref{f3}.
\item $fx$ has the form $(q,c,0,b)$ and is defined by the first upper equation in \eqref{f2}, while $fy$ is defined by the first lower equation in \eqref{f3}.
\end{enumerate}
None of the three cases is possible because of the separability property of \d.

Thus $fx,fy$ are defined by the same equation system.
If $fx, fy$ are defined by system \eqref{f1}, then they are defined by the same equation in \eqref{f1}, and then we have $x=y$ by the injectivity property of \d. If $fx, fy$ are both defined by \eqref{f2} or are both defined by \eqref{f3}, then obviously $x=y$.
\end{proof}

\subsection{Simulation}

\begin{theorem}\label{thm:csim}
The brick wall circuit algorithm simulates any $T$-step computation of $M$.
\end{theorem}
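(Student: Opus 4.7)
The plan is to prove the theorem by induction on the step number $t$. First, fix an encoding: an $M$-configuration $(p, i, X) \in Q \x \Z/N \x \Gm^{\Z/N}$ is represented by the row of $N$ cell data in which the $i$-th entry is the scanned datum $(p, X(i))$ and each other entry $j$ is the unscanned datum $(0, X(j))$. By construction, the inputs to the brick wall --- i.e., the row of half bricks feeding into row $0$ --- are precisely the encoding of the initial configuration $C_0$ of a $T$-step computation of $M$.

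The inductive claim is that, once rows $0, 1, \ldots, 2t-1$ of bricks have fired, the row of half bricks feeding into row $2t$ encodes the configuration $C_t$. I would then prove that applying rows $2t$ and $2t+1$ transforms the encoding of $C_t = (p, i, X)$ into the encoding of its successor $C_{t+1} = (q, i+D, X(b \At i))$, where $\d(p, X(i)) = (q, b, D)$. Cells whose bricks at both rows contain only no-head data are copied through by \eqref{f0}, so the analysis localizes to the two or three cells near the head, and iterating $T$ times yields the full simulation.

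For the inductive step, I would split into four cases according to the parity of $i + 2t$, which determines whether cell $i$ is the left or right half of its brick in row $2t$, and according to the sign of $D$. In each case the brick containing the head in row $2t$ fires one of the four upper equations of \eqref{f1}: the plain state $p$ is replaced by either $\qn$ or $\qs$, the tape symbol at cell $i$ is updated to $b$, and the scanned half either moves within the brick or stays put. Because row $2t+1$ is staggered by one column relative to row $2t$, the cell now carrying the annotated state falls into a different brick at row $2t+1$, and exactly one upper equation of either \eqref{f2} (for the $\uparrow$ annotation) or \eqref{f3} (for the $\downarrow$ annotation) applies to that brick; its output places state $q$ in cell $i+D$ and leaves the updated symbol $b$ in cell $i$, which together with the unchanged remaining cells constitutes the encoding of $C_{t+1}$.

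The main obstacle is the bookkeeping across the four case splits: one must confirm, in each case, that the annotated cell produced by row $2t$ sits in the half of its row-$2t+1$ brick whose matching equation moves the head in the intended direction and strips the annotation. The separability property of \d\ ensures that the annotation $\uparrow$ versus $\downarrow$ is unambiguously aligned with the sign of $D$, so the ``garbage'' lower equations of \eqref{f2} and \eqref{f3} are never invoked on the head during a genuine simulation step. The cylindrical topology of the tape enters only through the modular arithmetic on column indices and needs no special handling.
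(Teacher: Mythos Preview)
Your proposal is correct and follows essentially the same approach as the paper: reduce to showing that each pair of consecutive rows $2t,\,2t+1$ advances an encoded configuration by one Turing step, then trace the head through one application of \eqref{f1} followed by one of \eqref{f2} or \eqref{f3}. The paper trims the bookkeeping slightly by noting that all two-row subcircuits are isomorphic (so one may take $t=0$) and by invoking left/right symmetry to treat only the case $i$ even, leaving two subcases ($D=\pm1$) rather than four; your appeal to separability is not actually needed in the simulation argument (it matters for well-definedness and reversibility of $f$, not for the forward trace), and note that the relevant clauses of \eqref{f2},\eqref{f3} are the first ones on the line matching the sign of $q$, not always the ``upper'' line.
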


\begin{proof}
It suffices to prove that, for every $t = 0, 1, \dots, T-1$, the two-row subcircuit, composed of rows $2t$ and $2t+1$,  transforms an arbitrary configuration $(p,i,X)$ of $M$ into the successor configuration. Since all these two-row circuits are isomorphic, we may restrict attention to the two-row circuit composed of rows 0 and 1.

Let $a=X(i)$.
Two cases arise depending on the parity of $i$. If $i$ is even, then the scanned cell $(p,a)$ is handled by brick $G(0,i)$ which works with input $(p,a,0,c)$ where $c=X(i+1)$; otherwise $(p,a)$ is handled by brick $G(0,i-1)$ which works with input $(0,d,p,a)$ where $d=X(i-1)$. By symmetry, we may assume that $i$ is even.

According to \eqref{f0}, no-head inputs are just copied into outputs. So we pay attention only to one-head inputs. Let $\d(p,a) = (q,b,D)$.

If $D=+1$, then brick $G(0,i)$ transforms $(p,a,0,c)$ to $(0,b,\qn,c)$, and thus row 0 transforms configuration $(p,i,X)$ to an extended configuration $(\qn,i+1,X(b\At i))$,
so that the scanned cell $(\qn,c)$ is handled by brick $G(1,i+1)$ which works with input $(\qn,c,0,X(i+2))$.
That brick transforms this input into $(q,c,0,X(i+2))$, and
therefore the resulting configuration is $(q,i+1,X(b\At i))$ as required.

If $D=-1$, then gate $G(0,i)$ transforms $(p,a,0,c)$ into $(\qs,b,0,c)$, and thus row 0 transforms configuration $(p,i,X)$ to an extended configuration $(\qs,i,X(b\At i))$, so that the scanned cell $(\qs,b)$ is handled by brick $G(1,i-1)$ which works with input $(0,d,\qs,b)$.
That brick transforms this input into $(q,d,0,b)$, and
therefore the resulting configuration is $(q,i-1,X(b\At i))$ as required.
\end{proof}

\subsection{Premature halting problem}\mbox{}
\label{sub:hang}

In the preceding part of this section, we have assumed that the looped Turing machine does not have halting computations. Every computation will proceed for (at least) the $T$ steps that we want to simulate with a Boolean circuit.
In the present subsection, we comment on the situation where the TM's computations could halt.

Of course, if $T$ steps are completed before the TM halts, then our
previous results apply to those $T$ steps, and subsequent halting
is irrelevant.  The real issue concerns premature halting, before
step $T$. We cannot simply cut off our circuit at the moment of TM's halting, as that moment may depend on the input.
Is circuit simulation of such a TM nevertheless possible?

First, let us agree on a reasonable definition of Boolean circuit simulation in this situation.
A simulation of a TM by a circuit \C\ includes an encoding of initial TM states as inputs to \C\ and a decoding of outputs of \C\ as information about TM configurations.
If the TM arrives at a halting configuration $C = (p,i,X)$, then the circuit's output should code $C$ or, more generally, a required amount of information about $C$, e.g.\ a version of $C$ where $X$ is replaced by its restriction to a vicinity of the cell $i$.
Whatever it is, the required amount of information as well as a coding scheme should be agreed on ahead of time.

In the absence of any reversibility considerations, this situation is easy to address.
Define an extension $M' = (Q',\Gm',\d')_T$ of our Turing machine $M = (Q,\Gm,\d)_T$ so that any halting $M$ configuration $C$ gets a successor $M'$ configuration $C'$ which preserves the required information about $C$.
For example, $C'$ could be $C$ itself.
(This requires allowing the TM's head to stay at the same cell, but our work above can easily be extended to cover this option.)
Instead of halting, $M'$ is idling, pretending to compute without doing much.
Then let the circuit simulate $M'$.

But making the extended machine $M'$ reversible is not obvious.
Let $\Sigma$ be the set of halting $M$ configurations and all their $M'$ descendants, i.e., successors, successors of successors, etc.
Then $\Sigma$ is closed under the successor operation.
Assume that $M'$ is reversible, so that the successor operation is injective.
Since the total number of $M'$ configurations is finite, the successor operation is a bijection on $\Sigma$.
Accordingly, the successor of a nonhalting configuration cannot be halting.
Thus, every nonhalting initial configuration yields a nonhalting computation.

What can one do? Well, keep $\Sigma$ from being closed under the successor operation. Starting from a halting $M$ configuration $C$, the extended machine $M'$ will eventually halt, but not before marching around for sufficiently many steps.

Let\ $M' = (Q',\Gm',\d')_T$ where $Q'$ and $\Gm'$ are disjoint unions $Q\,\sqcup\, \set{p': p\in Q}$ and $\Gm\, \sqcup\, \set{a':a\in\Gm}$ respectively, and let $\d'$ be a function from $Q'\x\Gm$ to $Q' \x \Gm' \x \set{-1,+1}$ such that if $p\in Q$ and $a\in\Gm$ then
\begin{align*}
\d'(p,a) &=
\begin{cases}
  \d(p,a)    &\text{if }\ \d(p,a)\ \text{ is defined},\\
  (p',a',+1) &\text{if }\ \d(p,a)\
              \text{ is undefined and }p\in Q_{+1},\\
  (p',a',-1) &\text{if }\ \d(p,a)\
              \text{ is undefined and }p\in Q_{-1},\\
\end{cases} \\
&\text{and}\\
\d'(p',a) &=
\begin{cases}
  (p',a',+1) &\text{if }p\in Q_{+1},\\
  (p',a',-1) &\text{if }p\in Q_{-1},\\
\end{cases}
\end{align*}
Recall that, just before Lemma~\ref{lem:total}, we arranged that $Q_{-1}$ and $Q_{+1}$ partition $Q$.

$M'$ is reversible because \d\ has the separability and injectivity properties. It remains to check that, on no input, $M'$ halts before step $T$. Fix some input $I$. If $M$ doesn't halt before step $T$ on input $I$, we are done. So suppose that, after $t<T$ steps, $M$ arrives at a halting configuration $(p,i,X)$ on input $I$.

By symmetry, we may assume without loss of generality that $p\in Q_{+1}$.
Abbreviate $X(j)$ to $a_j$; we have that $\d(p,a_i)$ is undefined.
The $M'$ successor of $(p,i,X)$ is $(p',i+1,X(a_i'\At i))$, whose $M'$ successor is $(p', i+2, X(a_i'\At i, a_{i+1}'\At(i+1))$, and so on. Eventually, after $N$ steps (counting from the halting of $M$), $M'$ has converted every $a_j$ into $a_j'$ and halts.

It follows that, on input $I$, $M'$ will perform at least $T$ steps without halting. Its configuration after $T$ steps is $(p',i + (T-t),X')$ where
\[X'(j) =
\begin{cases}
  a_j' &\If\ i\le j < i+(T-t),\\
  a_j  &\text{otherwise}.
\end{cases} \]
The halting configuration $C=(p,i,X)$ of $M$ is easily recoverable from the configuration  $C'=(p',i + (T-t),X')$; just move the head to the cell of the first primed symbol and then remove all the primes.

Extend the transition function $\d'$ of $M'$ to a total function
\[ \d^*: Q'\times \Gm' \to  Q'\times \Gm'\times \set{-1,+1}\]
as in the proof of Lemma~\ref{lem:total}. The resulting TM $M^* = (Q',\Gm',\d^*)$ is reversible and so, by Theorem~\ref{thm:csim}, an appropriate brick wall circuit \C\ simulates $T$-step computations of $M^*$. But $M^*$ works exactly as $M'$ on the initial configurations of $M'$, and so \C\ simulates $T$-step computations of $M'$.

\section{Quantum Turing machine}
\label{sec:qc}

Quantum Turing machines, in short QTMs, were introduced by David
Deutsch \cite{Deutsch}, thoroughly studied by Ethan Bernstein and Umesh Vazirani \cite{BV}, and recently presented elegantly by Abel Molina and John Watrous in article \cite{MW} which is our main reference on QTMs.

Like a classical Turing machine, a quantum Turing machine is also given by a triple $(Q,\Gm,\d)$ where $Q$ and \Gm\ are disjoint finite sets, but now \d\ is a function
\[ \d: Q\x\Gm \to \Co^{Q\x\Gm\x\set{-1,+1}}. \]
Quantum Turing machines are also subject to a unitarity requirement to be introduced below.
We abbreviate $\d(p,a)(q,b,D)$ to $\dpa(q,b,D)$.

A \emph{classical configuration} of $M$ is a triple
\[ (p,i,X) \in Q \x \Z \x \Gm^\Z. \]
This notion coincides with the notion of configuration in \S\ref{sec:tm} for the same $Q$ and $\Gm$; the change in \d\ makes no difference.
Initially a QTM is in a classical configuration, just like a classical TM with the same states and alphabet (but of course later the QTM can enter a quantum superposition of classical configurations).

Our goal is a simple quantum-circuit simulation of $T$-step computations of a given QTM
\[ M = (Q,\Gm,\d).\]
Here $T$ is an arbitrary but fixed positive integer.
It is presumed that, initially, the head is in cell 0, and the tape is blank outside the segment $[-T,T]$.
As in the classical case, we may assume without loss of generality that $M$ has a looped tape whose cells are numbered by residues modulo $N = 2T+2$. Accordingly, a \emph{classical configuration} of the looped-tape QTM $M$ is a triple
\[ (p,i,X) \in Q \x (\Z/N) \x \Gm^{\Z/N}. \]

To simplify the exposition, we separate concerns and abstract from two issues, as in \cite{MW}.
\begin{enumerate}
\item The transition function \d\ of our looped-tape QTM is total.
\item We place no computability restrictions on the complex numbers appearing in \d. ``Any computationally offensive properties possessed by \d\ will be inherited by the quantum circuits that result from the simulation,'' \cite[\S3]{MW}.
\end{enumerate}

\subsection{The Hilbert space of the QTM}
\mbox{}

For any finite set $S$, let $\lBrack S\rBrack$ be the Hilbert space with an orthonormal basis \set{\ket{s}: s\in S}, the \emph{canonical basis} for $\lBrack S\rBrack$.
If $B_1, B_2$ are canonical bases for Hilbert spaces $\H_1, \H_2$ respectively then the \emph{canonical basis} for $\H_1 \ox \H_2$ is the orthonormal basis composed of vectors $\ket{b_1b_2} = \ket{b_1}\ox \ket{b_2}$ where \ket{b_1}, \ket{b_2} range over $B_1, B_2$ respectively.

Let CELL be the Hilbert space $\big\lBrack Q\sqcup\set{0}\big\rBrack \ox \lBrack\Gm\rBrack$, and consider
the tensor product $\CELL^{\ox\, \Z/N}$.
Each classical configuration $(p,i,X)$ of the looped-tape QTM $M$ is represented by a $\CELL^{\ox\, \Z/N}$ vector
\begin{equation}\label{canonvec}
\begin{aligned}
\ket{p,i,X} =\ &\ket{r_0a_0} \ox \ket{r_1a_1} \ox \cdots\ox \ket{r_{N-1}a_{N-1}}\ \text{ where }\\
&r_i=p,\ \text{ all other $r_j=0$,
and all } a_j = X(j).
\end{aligned}
\end{equation}
Intuitively, these vectors are the ones where exactly one cell contains a head.
The Hilbert space \H\ of $M$ is the subspace of
$\CELL^{\ox\, \Z/N}$ generated by vectors \ket{p,i,X}
which form the \emph{canonical basis} for $\H$.
Thus, as in \cite{MW}, the canonical orthonormal basis for the Hilbert space of the QTM of interest consists of vectors \ket{p,i,X} representing classical configurations of the QTM.

Finally, we can complete the definition of quantum Turing machine  $M = (Q,\Gm,\d)$. It is required that the transformation
\[\U_M\ket{p,i,X} = \sum_{q,b,D}\d_{pX(i)}(q,b,D)\,\ket{q,i+D,X(b\At i)}\]
be unitary. This $U_M$ represents any one step in the computation of $M$.

\subsection{Transition function}\mbox{}

View every $\d_{pa}$ as a vector
\[ \dpa = \sum_{q,b,D} \d_{pa}(q,b,D) \ket{q,b,D} \]
in the Hilbert space $\lBrack Q\rBrack\ox \lBrack\Gm\rBrack\ox \lBrack\set{-1,+1}\rBrack$. Let
\[ \ketb{\Lpab} = \sum_q \dpa(q,b,-1)\ket{q,b,{-1}}, \quad
\ketb{\Rpab} = \sum_q \d_{pa}(q,b,+1)\ket{q,b,{+1}}, \]
so that $\d_{pa} = \sum_b\ketb{\Rpab} + \sum_b\ketb{\Rpab}$.

\begin{proposition}
\d\ has the following three properties.
\begin{align*}
&\text{Unit length}:
  &&|\d_{pa}| = 1
  &&\text{for all } p,a.\\
&\text{Orthogonality}:
  &&\d_{p_1a_1}\ort \d_{p_2a_2}
  &&\text{for all } (p_1,a_1)\ne(p_2,a_2).\\
&\text{Separability}:
  &&\ketb{L_{p_1a_1b_1}} \ort \ketb{R_{p_2a_2b_2}}
  &&\text{for all } p_1,a_1,b_1,p_2,a_2,b_2.
\end{align*}
\end{proposition}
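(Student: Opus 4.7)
The plan is to derive each of the three properties from the unitarity of $U_M$ (hence from its preservation of inner products) by testing it on carefully chosen pairs of canonical basis vectors $\ket{p,i,X}$.

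First, for unit length, I apply $U_M$ to a single basis vector $\ket{p,i,X}$ with $X(i)=a$. The defining formula yields
\[ U_M\ket{p,i,X} = \sum_{q,b,D} \dpa(q,b,D)\,\ketb{q,i+D,X(b\At i)}. \]
The key observation is that the configurations $(q,i+D,X(b\At i))$ are pairwise distinct as $(q,b,D)$ varies: $q$ is read off from the state component, $D$ from the head offset from $i$, and then $b$ from the value of the new tape at cell $i$. Hence the right side is an orthogonal expansion, so $\norm{U_M\ket{p,i,X}}^2 = \sum_{q,b,D}|\dpa(q,b,D)|^2 = |\dpa|^2$, and unitarity forces this to equal $1$.

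Second, for orthogonality, I take $(p_1,a_1)\ne(p_2,a_2)$ and pick a common head position $i$ together with tapes $X_1,X_2$ that agree at every cell $j\ne i$ and satisfy $X_k(i)=a_k$ for $k=1,2$. The basis vectors $\ket{p_1,i,X_1}$ and $\ket{p_2,i,X_2}$ are then distinct (either $p_1\ne p_2$, or $a_1\ne a_2$ and the tapes already disagree at cell $i$), hence orthogonal. Expanding the two images under $U_M$ and looking at a cross term $\braket{q_1,i+D_1,X_1(b_1\At i)}{q_2,i+D_2,X_2(b_2\At i)}$, it survives only when $q_1=q_2$, $D_1=D_2$, and $b_1=b_2$; this is because $X_1,X_2$ already agree outside $i$, so only cell $i$ can distinguish the post-update tapes. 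Collecting surviving terms collapses $\braket{U_M p_1,i,X_1}{U_M p_2,i,X_2}$ to $\braket{\d_{p_1a_1}}{\d_{p_2a_2}}$, which must vanish by unitarity.

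Third, separability is essentially immediate from the definitions. Each $\ket{\Lpab}$ is a linear combination of basis vectors $\ket{q,b,{-1}}$ lying in the $D={-1}$ summand of $\lBrack Q\rBrack\ox\lBrack\Gm\rBrack\ox\lBrack\set{-1,+1}\rBrack$, whereas $\ket{\Rpab}$ lies entirely in the orthogonal $D={+1}$ summand; so any $\ket{L_{p_1a_1b_1}}$ and $\ket{R_{p_2a_2b_2}}$ are automatically orthogonal, without invoking unitarity at all. The main obstacle is really just the bookkeeping in the orthogonality step: insisting that $X_1,X_2$ agree outside $i$ is precisely what is needed so that no unwanted cross terms survive and the inner product of the images collapses cleanly to $\braket{\d_{p_1a_1}}{\d_{p_2a_2}}$. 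Once that choice is in place, all three claims follow in just a few lines.
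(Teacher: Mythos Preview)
Your arguments for unit length and orthogonality are essentially the paper's: apply $\U_M$ to well-chosen basis configurations (the paper fixes $i=0$ and writes them as $\ket{p_k,0,X(a_k\At0)}$, which is exactly your ``tapes agreeing outside $i$'') and read off the conclusion from preservation of inner products.

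For separability, however, you have been tripped up by a notational slip in the paper rather than found a shortcut. As defined just before the proposition, $\ket{L_{pab}}$ and $\ket{R_{pab}}$ carry the tensor factors $\ket{b}$ and $\ket{\pm1}$, so the orthogonality is indeed tautological, as you say. But the paper later (\S6.1) silently redefines $\ket{L_{pab}}=\sum_q\dpa(q,b,-1)\ket q$ and $\ket{R_{pab}}=\sum_q\dpa(q,b,+1)\ket q$ as vectors in $\lBrack Q\rBrack$ alone, and it is \emph{this} orthogonality---equivalently, $\sum_q\overline{\d_{p_1a_1}(q,b_1,-1)}\,\d_{p_2a_2}(q,b_2,+1)=0$---that is the intended content of ``separability'' and is actually used to conclude $L\perp R$ in $\lBrack Q\rBrack$ when building the brick operator. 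That statement is \emph{not} automatic; it is the quantum analogue of the classical disjointness $Q_{-1}\cap Q_{+1}=\varnothing$. The paper obtains it by placing the two heads at cells $+1$ and $-1$, so that only the $D=-1$ summands of the first image and the $D=+1$ summands of the second can land at the same head position $0$, forcing the relevant inner product in $\lBrack Q\rBrack$ to vanish. Your argument does not yield this, so as a proof of what the paper needs, it has a genuine gap; you should supply the two-head-position computation (or an equivalent one) in place of the ``immediate from the definitions'' remark.
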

\noindent
The three claims%
\footnote{Orthogonality and separability here are the analogs of injectivity and separability in Proposition~\ref{prp:crev}.}
form a part of Theorem~5.3 in \cite{BV}, but it seems useful to spell out a detailed proof.

\begin{proof}
Let $X$ be an arbitrary element of $\Gm^{\Z/N}$; for example, $X$ may be everywhere blank.

\texttt{Unit length.} If $C$ is a classical configuration of the form $(p,0,X(a\At0))$, then $\U_M\ket C = \sum_{q,b,D}\d_{pa}(q,b,D)\ket{q,D,X(b\At0)}$.
Since $\U_M$ is unitary, we have
\[
 1 = |\U_M\ket C|^2 = \sum_{q,b,D} \big(\d_{pa}(q,b,D)\big)^*\d_{pa}(q,b,D)  = |\d_{pa}|^2.
\]

\texttt{Orthogonality}.
If $(p_1,a_1)\ne(p_2,a_2)$, then canonical basis vectors \ket{p_1,0,X(a_1\At0)} and \ket{p_2,0,X(a_2\At0)} of $\H$ are orthogonal, and so are their $\U_M$ images:
\[\sum_{q,b,D}\d_{p_1a_1}(q,b,D)\ket{q,D,X(b\At0)} \ort
\sum_{q,b,D}\d_{p_2a_2}(q,b,D)\ket{q,D,X(b\At0)}.
\]
Hence
\[
 0 = \sum_{q,b,D} \big(\d_{p_1a_1}(q,b,D)\big)^* \d_{p_2a_2}(q,b,D) =
 \braket{\d_{p_1a_1}}{\d_{p_2a_2}}.
\]

\texttt{Separability.}
For any $p_1,a_1,b_1,p_2,a_2,b_2$, the canonical basis configurations \ket{p_1,1,X(a_1\At1,b_2\At{-1})} and \ket{p_2,-1,X(b_1\At1,a_2\At{-1})} are orthogonal, and therefore so are their $\U_M$ images:
\begin{align*}
&\sum_{q,b,D}\d(p_1,a_1)(q,b,D)\ket{q,1+D,X(b\At1,b_2\At{-1})} \ort\\
&\sum_{q,b,D}\d(p_2,a_2)(q,b,D)\ket{q,-1+D,X(b_1\At1,b\At{-}1)}.
\end{align*}
A summand in the upper sum are orthogonal to every summand in the lower sum unless $b=b_1$ and $D=-1$. Similarly, a summand in the lower sum is orthogonal to every summand in the upper sum unless $b=b_2$ and $D+1$.
Setting $Y = X(b_1\At1,b_2\At{-}1)$, we have
\[ \sum_q\d(p_1,a_1)(q,b_1,-1)\ket{q,0,Y} \ort
   \sum_q\d(p_2,a_2)(q,b_2,+1)\ket{q,0,Y}.
\]
The separability property follows.
\end{proof}

\section{Quantum circuit simulation}
\label{sec:qc}

We convert the reversible Boolean brick wall circuit, used to simulate $T$-step computations of a reversible Turing machine, to a quantum circuit simulating $T$-step computations of the quantum Turing machine $M$ of the previous section.
As in \S\ref{sec:bc}, we use annotated versions of states of $M$. Again, sets
\[ Q,\quad \Qn=\set{\qn: q\in Q},\quad
     \Qs=\set{\qs: q\in Q},\quad \set{0} \]
are disjoint and $Q^* = Q\sqcup \Qn\sqcup \Qs$.
The obvious bijections $q\mapsto \qn$ and $q\mapsto \qs$ give rise to obvious isomorphisms from $\lBrack Q \rBrack$ to $\lBrack \Qn \rBrack$ and to $\lBrack \Qs \rBrack$ respectively.

The quantum brick circuit works with an extension $\H^*$ of the Hilbert space $\H$ of our QTM $M$.
Let $\CELL^*$ be the Hilbert space $\big\lBrack Q^*\sqcup\set{0}\big\rBrack \ox \lBrack \Gm\rBrack$.
The extended space $\H^*$ is the subspace of the tensor product
$ \left(\CELL^*\right)^{\ox\, \Z/N} $
generated by vectors
\begin{equation}\label{hmstar}
\begin{aligned}
\ket{q^*,i,X} =\ &\ket{r_0a_0} \ox \ket{r_1a_1} \ox \cdots\ox \ket{r_{N-1}a_{N-1}}\quad\text{where}\\
& q^*\in Q^*,\ i\in\Z/N,\ X:\Z/N\to\Gm,\\
& r_i = q^*,\ \text{ all other $r_j=0$, and all }\ a_j=X(j).
\end{aligned}
\end{equation}
Every row of the quantum brick wall circuit will compute a linear operator on $\H^*$.

To simplify the exposition, our quantum circuit works  with qudits in Hilbert spaces $\lBrack Q^*\rBrack$ and $\lBrack\Gm\rBrack$. Obviously such quantum circuits can be easily simulated by qubit-based circuits.

\subsection{The brick operator}\mbox{}

Let BRICK be the subspace of tensor product $\CELL^* \ox \CELL^*$ generated by vectors
\[
\ket{q^*,a,0,b},\ \ket{0,a,q^*,b},\ \ket{0,a,0,b}\quad
\text{where }\ q^*\in Q^*\ \text{and } a,b\in\Gm.
\]
Vectors \ket{q^*,a,0,b} and \ket{0,a,q^*,b} are \emph{one-head vectors}, and vectors \ket{0,a,0,b} are \emph{no-head vectors}.

All gates of our quantum circuit are copies of a single quantum gate, called the \emph{(quantum) brick gate}, which computes a linear operator $U$ on BRICK.
By linearity, it suffices to define $U$ on a set of vectors spanning BRICK. We do that in equations \eqref{u0}--\eqref{u3}.

We start with no-head inputs.
$U$ just copies any such input into its output.
\begin{equation}\label{u0}
 U\ket{0,a,0,b} = \ket{0,a,0,b}
\end{equation}

Let $p$ range over $Q$.
The operator $U$ updates the state and the tape symbol according to \d, but it may or may not be able to move the head.
Processing any input \ket{0,c,p,a}, a gate $G(h,k)$ succeeds with the left but not the right moves; processing \ket{p,a,0,c}, it succeeds with the right but not the left moves.
In either case, successes are marked with $\uparrow$, and failures are marked with $\downarrow$:
\begin{equation*}
\begin{aligned}
U\ket{0,c,p,a}
&= \sum_{qb} \dpa(q,b,\mo)\,\ketb{\qn,c,0,b}
 + \sum_{qb} \dpa(q,b,\po)\,\ketb{0,c,\qs,b}, \\
U\ket{p,a,0,c}
&= \sum_{qb} \dpa(q,b,\po)\,\ketb{0,b,\qn,c}
 + \sum_{qb} \dpa(q,b,\mo)\,\ketb{\qs,b,0,c}.
\end{aligned}
\end{equation*}

Recall that $\ketb{L_{pab}} = \sum_q \dpa(q,b,\mo)\ket{q}$ and
$\ketb{R_{pab}} = \sum_q \dpa(q,b,\po)\ket{q}$.
Let \ketb{\Ln_{pab}}, \ketb{\Rn_{pab}} and \ketb{\Ls_{pab}}, \ketb{\Rs_{pab}} be the images of \ketb{L_{pab}}, \ketb{R_{pab}} under the obvious isomorphisms from $\lBrack Q \rBrack$ to $\lBrack \Qn \rBrack$ and to $\lBrack \Qs \rBrack$ respectively.
Then the two equations can be rewritten thus:
\begin{equation}\label{u1}
\begin{aligned}
U\ket{0,c,p,a}
&= \sum_b \ketb{\Ln_{pab},c,0,b}
 + \sum_b \ketb{0,c,\Rs_{pab},b},\\
U\ket{p,a,0,c}
&= \sum_b \ketb{0,b,\Rn_{pab},c}
 + \sum_b \ketb{\Ls_{pab},b,0,c}.
\end{aligned}
\end{equation}

Let $L,R$ be the subspaces of $\lBrack Q\rBrack$ generated by  vectors \ketb{L_{pab}} and by vectors \ketb{R_{pab}}.
Let subspaces \Ln, \Rn of $\lBrack \Qn\rBrack$ and subspaces \Ls, \Rs\ of $\lBrack \Qs\rBrack$ similarly be the images of $L, R$ under the obvious isomorphisms.
By the separability property of \d, $L$ and $R$ are orthogonal.
It follows that $\Ln \ort \Rn$ and $\Ls \ort \Rs$.
Let \ket{v} range over the unit vectors of $\lBrack Q\rBrack$.

%\yg{Consider changing \ket0 to \ket{-} or \ket{\sim} or \ket{\bullet}, \ket{\circ}, \ket{\_} .}

The gates $G(h-1,k+1)$ and $G(h+1,k+1)$ complete $G(h,k)$'s successful execution by removing the success marks. To achieve that, we define

\begin{equation}\label{u2}
\begin{aligned}
U\ket{0,d,\vn,c} &= \ket{0,d,v,c}   &&\text{if }\ket v\in L,\\
U\ket{\vn,c,0,d} &= \ket{v,c,0,d}   &&\text{if }\ket v\in R.
\end{aligned}
\end{equation}

We must also define $U\ket{0,d,\vn,c}$ when $\vn\ort L$ and define $U\ket{\vn,c,0,d}$ when $\vn\ort R$. As in \S\ref{sub:bf}, we swap cell data.

\begin{equation}\label{u2'}
\begin{aligned}
U\ket{0,d,\vn,c} &= \ket{\vn,c,0,d} &&\text{if }\ket v \ort L,\\
U\ket{\vn,c,0,d} &= \ket{0,c,\vn,d} &&\text{if }\ket v \ort R.
\end{aligned} \tag{\ref{u2}$'$}
\end{equation}

The gates $G(h-1,k+1)$ and $G(h+1,k+1)$ perform the moves that gate $G(h,k)$ couldn't accomplish or swap cell data.
\begin{equation}\label{u3}
\begin{aligned}
U\ket{0,d,\vs,c} &= \ket{v,c,0,d}   &&\text{if }\ket v\in L,\\
U\ket{\vs,c,0,d} &= \ket{0,d,v,c}   &&\text{if }\ket v\in R.
\end{aligned}
\end{equation}

\begin{equation}\label{u3'}
\begin{aligned}
U\ket{0,d,\vs,c} &= \ket{\vs,c,0,d} &&\text{if }\ket v \ort L,\\
U\ket{\vs,c,0,d} &= \ket{0,c,\vs,d} &&\text{if }\ket v \ort R.
\end{aligned} \tag{\ref{u3}$'$}
\end{equation}

\subsection{Unitarity}

\begin{proposition}
There is a unique linear operator $U$ on BRICK, satisfying equations \eqref{u0}--(\ref{u3}$'$), and this $U$ is unitary.
\end{proposition}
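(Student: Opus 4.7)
First, existence and uniqueness: equations \eqref{u0}--\eqref{u3'} fix $U$ on a set that spans BRICK. The no-head vectors are handled by \eqref{u0}, the $Q$-state one-head vectors by \eqref{u1}, and the four families of $Q^\uparrow$- or $Q^\downarrow$-state one-head vectors by the pairs \eqref{u2}/\eqref{u2'} and \eqref{u3}/\eqref{u3'}. Using the decompositions $\lBrack Q\rBrack=L\oplus L^\perp=R\oplus R^\perp$, each such pair jointly and unambiguously determines $U$ on all its vectors, so by linearity $U$ exists uniquely.

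Second, I would derive the key dimensional identity $L\oplus R=\lBrack Q\rBrack$ (equivalently $R=L^\perp$). The $nm$ vectors $\{\dpa\}_{(p,a)\in Q\times\Gm}$ are orthonormal (unit length and orthogonality), and by separability each lies in the subspace $\bigl(L\ox\lBrack\Gm\rBrack\ox\ket{{-1}}\bigr)\oplus\bigl(R\ox\lBrack\Gm\rBrack\ox\ket{{+1}}\bigr)$ of dimension $(\dim L+\dim R)m$. Comparing dimensions, $nm\le(\dim L+\dim R)m$, so $\dim L+\dim R=n$.

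Third, for unitarity: since BRICK is finite-dimensional it suffices to exhibit an orthonormal basis of BRICK whose $U$-images are orthonormal. Pick an orthonormal basis $\{\ket{\ell_i}\}\cup\{\ket{r_j}\}$ of $\lBrack Q\rBrack$ with $\{\ket{\ell_i}\}$ spanning $L$ and $\{\ket{r_j}\}$ spanning $R$; by the previous paragraph this is simultaneously adapted to both the $L\oplus L^\perp$ and the $R\oplus R^\perp$ decompositions. Use it to form the obvious orthonormal basis of BRICK (the state slot is $\ket p$ for $p\in Q$ or one of $\ket{\ell_i},\ket{r_j}$). On the Type 3--6 adapted basis vectors, \eqref{u2}--\eqref{u3'} send each basis vector to another basis vector of BRICK (to Type 1 or Type 2 via the unprimed equations, or to Type 3--6 with cell data swapped via the primed ones); pairwise orthonormality of these images is immediate by inspection, using $L\ort R$ where two outputs share a Type. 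The substantive check is for Type 1 and Type 2 inputs handled by \eqref{u1}:
\[
\braket{U\ket{p,a,0,c}}{U\ket{p',a',0,c'}} \,=\, \d_{cc'}\!\sum_{q,b,D}\dpa(q,b,D)^*\d_{p'a'}(q,b,D) \,=\, \d_{cc'}\braket{\dpa}{\d_{p'a'}} \,=\, \d_{cc'}\d_{pp'}\d_{aa'},
\]
by unit length and orthogonality, with an analogous computation for Type 2. Cross inner products between a Type 1/2 image and a Type 3--6 image either vanish by type mismatch or reduce to an inner product $\braket{R_{pab}}{v}=0$ with $\ket v\in R^\perp=L$ (separability). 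Hence $U$ preserves inner products on a basis, so it is an isometry on BRICK, and therefore unitary. The main obstacle is just the bookkeeping of cross inner-products: the case analysis is tedious, but every case reduces to one of the three $\d$-properties or to the identity $L\oplus R=\lBrack Q\rBrack$.
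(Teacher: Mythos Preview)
Your proof is correct and follows the same overall plan as the paper --- verify that $U$ preserves inner products on a spanning set via case analysis --- but with one genuine addition: you first prove the structural fact $L\oplus R=\lBrack Q\rBrack$ (equivalently $R=L^\perp$) by the dimension count on the orthonormal family $\{\dpa\}$. The paper does \emph{not} prove or use this; it explicitly chooses orthonormal bases for $L$, for $R$, \emph{and for the orthogonal complement of $L+R$ in $\lBrack Q\rBrack$}, and carries that third piece through the case analysis. Your lemma shows that third piece is zero, which is the quantum analog of the classical convention (just before Lemma~\ref{lem:total}) that $Q_{\mo}$ and $Q_{\po}$ partition $Q$. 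The payoff is that a single orthonormal basis $\{\ket{\ell_i}\}\cup\{\ket{r_j}\}$ is simultaneously adapted to both the $L\oplus L^\perp$ and $R\oplus R^\perp$ decompositions, so every ``primed'' equation fires on a vector already in $R$ or $L$ and the cross-checks all reduce cleanly to $L\perp R$. The paper's route is slightly more robust (it never needs the dimension identity), while yours extracts a cleaner picture of the state space and trims the bookkeeping.
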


\begin{proof}
Let $B$ be the set of vectors which appear as arguments of $U$ in equations \eqref{u0}--(\ref{u3}$'$).
The uniqueness of $U$ is obvious because $B$ spans BRICK.
Inspection of the equations shows the existence of $U$ because the equations don't clash when extended linearly.
(In more detail, choose orthonormal bases for $L$, for $R$, and for the orthogonal complement of $L+R$ in $\lBrack Q\rBrack$. Restrict \ket{v} in \eqref{u1}--(\ref{u3}$'$) to range over these bases.
Then $U$ is defined on an orthonormal basis for BRICK, and the existence of a linear extension is clear.
And of course \eqref{u1}--(\ref{u3}$'$) continue to hold as written.)
It remains to prove the unitarity of $U$.

Since $B$ spans BRICK, it suffices to prove that $U$ preserves the inner product on $B$, i.e., that $\braket{Ux}{Uy} = \braket xy$ holds for all $x,y\in B$.
This is obvious if $x$ or $y$ is a no-head vector; in that case $\braket{Ux}{Ux} = \braket xx =1$ and $\braket{Ux}{Uy} = \braket xy = 0$ for every $y\ne x$ in $B$.
In the rest of the proof, we assume that both, $x$ and $y$, are one-head vectors. By symmetry, we may assume that $Uy$ is defined by the same equation as $Ux$ or by a later equation.

Case~1: $Ux$ is defined by \eqref{u1}.
By the symmetry, we may assume that $Ux$ is defined by the upper equation.

If $Uy$ is defined by one of the following equations, then trivially \braket xy = 0. In all these cases, trivially \braket{Ux}{Uy} = 0 except for the following two cases.
One is the equation in (\ref{u2}$'$) where $\ket{v} \ort L$; we have $\ketb{\vn} \ort \Ln$ and therefore \braket{Ux}{Uy} = 0.
The other is the equation in (\ref{u3}$'$) where $\ket{v} \ort R$; we have $\ketb{\vs} \ort \Rs$, and therefore \braket{Ux}{Uy} = 0.

So suppose that $Uy$ is also defined by the upper equation in \eqref{u1}.
If $x=y$ then, by the unit-length property of \d, we have \braket{Ux} {Ux} = 1 = \braket xx.
Otherwise, by the orthogonality property of \d, \braket{Ux}{Uy} = 0 = \braket xy.

Case~2: $Ux$ is defined by \eqref{u2}.
By the symmetry, we may assume that $Ux$ is defined by the upper equation, where $\ket{v}\in L$.

If $Uy$ is defined by one of the following equations, then trivially \braket xy = 0.
In all these cases, trivially \braket{Ux}{Uy} = 0 except for the case of the lower equation in \eqref{u3}, where $\ket{v}\in R$. Use the fact that $L\ort R$.

So suppose that $Uy$ is also defined by the upper equation in \eqref{u2}.
We have \braket{Ux}{Uy} = \braket xy by the obvious isomorphism from $\lBrack Q \rBrack$ to $\lBrack \Qn \rBrack$.

Case~2$'$: $Ux$ is defined in (\ref{u2}$'$).
By the symmetry, we may assume that $Ux$ is defined by the upper equation.
If $Uy$ is defined by one of the following equations, then trivially \braket xy = 0 = \braket {Ux}{Uy}.
If $Uy$ is also defined by the upper equation in (\ref{u2}$'$), then obviously \braket xy = 0 = \braket {Ux}{Uy}.

Case~3: $Ux$ is defined in \eqref{u3}.
By the symmetry, we may assume that $Ux$ is defined by the upper equation.
If $Uy$ is defined by one of the following equations, then trivially \braket{Ux}{Uy} = \braket xy.
So suppose that $Uy$ is also defined by the upper equation in \eqref{u3}.
We have \braket{Ux}{Uy} = \braket xy by the obvious isomorphism from $\lBrack Q \rBrack$ to $\lBrack \Qs \rBrack$.

Case~3$'$ is similar to case~2$'$.
\end{proof}

A quantum circuit may give rise to a number of quantum algorithms. Semantically, all these algorithms are equivalent \cite{G250}. In particular, if one of them computes a unitary transformation then they all compute the same unitary transformation. But the number of computation steps may differ from one algorithm to another. The greedy algorithm, at each step, executes all the gates that can be executed. Below by circuit computations we mean the computations of the greedy algorithm of the circuit in question.

\subsection{Simulation}

\begin{theorem}
The brick wall circuit simulates any $T$-step computation of our quantum Turing machine $M$.
\end{theorem}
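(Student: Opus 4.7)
The plan is to mirror the classical simulation proof (Theorem~\ref{thm:csim}) and lift it to superpositions by linearity. Since each brick operator $U$ is unitary and the bricks in any single row act on disjoint pairs of cells, the whole brick wall circuit computes a well-defined unitary on $(\CELL^*)^{\ox\, \Z/N}$. The circuit decomposes into $T$ isomorphic two-row blocks, so it suffices to prove that one such block---say rows $0$ and $1$---agrees with the single-step QTM unitary $\UM$ on each canonical basis vector $\ket{p,i,X} \in \H$; by linearity this then extends to all of $\H$, and composing $T$ blocks gives the full $T$-step simulation.

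Next I would trace one canonical basis vector $\ket{p,i,X}$ through the two rows, paralleling the case structure of the classical argument. By symmetry assume $i$ is even. In row $0$ every brick away from the head merely copies its no-head input via \eqref{u0}, while the brick $G(i,0)$ holding the head applies \eqref{u1} to $\ket{p,a,0,c}$ and produces a two-term superposition: a ``right-move branch'' in which the head has tentatively moved to cell $i+1$ carrying a marker in $\Rn$, and a ``left-move branch'' in which the new symbol has been written at cell $i$ but the move is still pending, with a marker in $\Ls$. In row $1$ the marked datum falls into a new brick---the left half of $G(i+1,1)$ for the right-move branch, the right half of $G(i-1,1)$ for the left-move branch---so equations \eqref{u2} and \eqref{u3} strip the markers, carry out the pending left move, and return the heads to plain states in $\lBrack Q\rBrack$ via the obvious isomorphisms. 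Recombining the two branches using $\dpa = \sum_b \ketb{L_{pab}} + \sum_b \ketb{R_{pab}}$, I expect to arrive at $\sum_{q,b,D}\dpa(q,b,D)\ketb{q,i+D,X(b\At i)} = \UM\ket{p,i,X}$.

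The main obstacle will be verifying that in the row-$1$ cleanup every nontrivial brick input falls into the ``in-subspace'' case of \eqref{u2} or \eqref{u3}, never into the swap companions \eqref{u2'} or \eqref{u3'}. Here the separability property of $\d$ plays exactly the role its classical counterpart plays in Theorem~\ref{thm:csim}: by construction the markers produced in row $0$ are $\ketb{L^\downarrow_{pab}} \in \Ls$ and $\ketb{R^\uparrow_{pab}} \in \Rn$, and since $L \ort R$ we have $\Ls \ort \Rs$ and $\Ln \ort \Rn$, so the orthogonality hypotheses of \eqref{u2'}, \eqref{u3'} are not met and those equations never fire on the inputs arising from a classical configuration. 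Once this containment inside the ``good'' subspaces is checked, the algebraic recombination in the preceding paragraph is routine, and concatenating $T$ two-row blocks completes the proof.
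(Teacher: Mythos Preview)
Your plan matches the paper's proof almost exactly: reduce to one two-row block by isomorphism, reduce to canonical basis vectors by linearity, fix the parity of $i$ by symmetry, apply \eqref{u1} in row~0 to split into an $\Rn$-branch and an $\Ls$-branch, then apply \eqref{u2} and \eqref{u3} in row~1 to strip the markers and complete the pending move.

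One small correction to your diagnosis of the ``main obstacle.'' The reason the swap equations \eqref{u2'} and \eqref{u3'} never fire is not separability. The row-0 outputs carry markers $\ketb{\Rn_{pab}}$ and $\ketb{\Ls_{pab}}$, and these lie in $\Rn$ and $\Ls$ simply because $R$ and $L$ are \emph{defined} as the subspaces generated by the vectors $\ketb{R_{pab}}$ and $\ketb{L_{pab}}$. Hence the membership hypotheses of \eqref{u2} and \eqref{u3} are satisfied directly, and that is all that is needed; the orthogonality $L\ort R$ plays no role at this step. (Separability was used earlier, in the unitarity proof for $U$.) The paper's simulation proof accordingly cites only \eqref{u0}, \eqref{u2}, \eqref{u3} for the row-1 action and does not invoke separability.
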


\begin{proof}
It suffices to prove that, for every $t = 0, 1, \dots, T-1$, the two-row subcircuit of our quantum brick circuit, composed of rows $2t$ and $2t+1$,  computes the unitary transformation $\U_M$ on Hilbert space $\H$.
Since all these two-row circuits are isomorphic, it suffices to prove that the two-row circuit $C$ composed of rows 0 and 1 computes $\U_M$.
By linearity, it suffices to prove that $C$ transforms any canonical basis vector \ket{p,i,X} of $\H_M$ into
\[
\U_M\ket{p,i,X} = \sum_{q,b,D}\dpa(q,b,D)\ket{q,i+D,X(b\At i)}
\]
where $a = X(i)$.

Let $c=X(i+1)$ and $d=X(i-1)$.
Two cases arise depending on the parity of $i$.
If $i$ is even, then the crucial cell $(p,a)$ is handled by
gate $G(0,i)$ which works with input $(p,a,0,c)$, and, if $i$ is odd, then $(p,a)$ is handled by gate $G(0,i-1)$ which works with input $(0,d,p,a)$. By symmetry, we may assume that $i$ is even.

Recall that $\ket{\Lpab} = \sum_q\dpa(q,b,\mo)\ket{q}$ and $\ket{\Rpab} = \sum_q\dpa(q,b,\po)\ket{q}$.
Let $\lambda = \set{b: \braket{\Lpab}{\Lpab}>0}$ and $\rho = \set{b: \braket{\Rpab}{\Rpab}>0}$.
Then $\U_M\ket{p,i,X}$ is the sum
\[  \sum_{b\in\lambda}\ketb{\Lpab,i-1,X(b\At i)} +
    \sum_{b\in\rho}\ketb{\Rpab,i+1,X(b\At i)}\]
of vectors \ketb{\Lpab,i-1,X(b\At i)} and \ketb{\Rpab,i+1,X(b\At i)}.

According to \eqref{u0}, given any canonical basis vector, only one gate of the upper row does real work. The other gates just copy their inputs to their outputs.
By the lower equation in \eqref{u1}, the first row of gates transforms \ket{p,i,X} into
\[ \sum_{b\in\lambda} \ket{\Ls_{pab},i,X(b\At i)} +
   \sum_{b\in\rho}     \ket{\Rn_{pab},i+1,X(b\At i)}.
\]
which, according to \eqref{u0},\eqref{u2} and \eqref{u3}, is transformed by the second row of gates into
\[ \sum_{b\in\lambda} \ket{\Lpab,i-1,X(b\At i)} +
   \sum_{b\in\rho}     \ket{\Rpab,i+1,X(b\At i)}
\]
as required.
\end{proof}

\newpage

\end{document}